\newtheorem{thm}{Theorem}
\newtheorem{defi}{Theorem}
\newtheorem{definition}[defi]{Definition}
\begin{document}

\title{Protection of quantum systems by nested dynamical decoupling}

\author{Zhen-Yu Wang}

\author{Ren-Bao Liu}
\email{rbliu@cuhk.edu.hk}

\affiliation{Department of Physics, The Chinese University of Hong
Hong, Shatin, N. T., Hong Kong, China}

\begin{abstract}
Based on a theorem we establish on dynamical decoupling of
time-dependent systems, we present a scheme of nested Uhrig
dynamical decoupling (NUDD) to protect multi-qubit systems in
generic quantum baths to arbitrary decoupling orders, using only
single-qubit operations. The number of control pulses in NUDD
increases polynomially with the decoupling order. For general
multi-level systems, this scheme can preserve a set of unitary
Hermitian system operators which mutually either commute or
anti-commute, and hence all operators in the Lie algebra generated
from this set of operators, generating an effective symmetry group
for the system up to a given order of precision. NUDD can be
implemented with pulses of finite amplitude, up to an error in the
second order of the pulse durations.
\end{abstract}

\pacs{03.67.Pp, 03.65.Yz, 82.56.Jn, 76.60.Lz}

\maketitle

\section{Introduction}
Both in high-precision magnetic resonance spectroscopy and in
quantum computing, it is essential to suppress unwanted couplings
within a quantum system and between the system and its environment
(or bath). Such couplings result in population relaxation, phase
randomization (pure dephasing), and more generally, unwanted
evolution of certain system operators. Dynamical decoupling (DD) is
a so-called open-loop control scheme to average out the system-bath
interactions through stroboscopic operations of the system (without
direct control on the environment). DD originated from the Hahn
echo~\cite{Hahn1950Echo} and has evolved into variations with
complicated sequences~\cite{Carr1954CP} for
high-precision~\cite{Mehring_NMR,Schweiger} and
multi-dimensional~\cite{Ernst1987_Book} magnetic resonance
spectroscopy. When the field of quantum computing was opened up, DD
was introduced to protect qubit
coherence~\cite{Viola1998_PRA,Ban_DD}. By unitary symmetrization
procedure~\cite{Zanardi:99PLA,Viola1999_PRL}, DD cancels errors of
quantum evolutions up to the first order in the Magnus expansion,
and the corresponding cyclic scheme is referred to as periodic DD. A
geometric understanding of the symmetrization procedure was given in
Ref.~\cite{Byrd2002GeometricDD}. To eliminate errors to the second
order in the Magnus expansion, mirror-symmetric arrangement of two
DD sequences (SDD) can be used~\cite{Viola1999_PRL}. A particularly
interesting scheme is the concatenated DD
(CDD)~\cite{Khodjasteh2005_PRL,Khodjasteh:2007PRA,Yao2007_RestoreCoherence,Santos2008NJP},
which uses recursively constructed pulse sequences to eliminate
decoherence to an arbitrary decoupling order (defined as the power
of the total evolution time, which is assumed short). The
performance of CDD was experimentally demonstrated for spins in
solid-state environments~\cite{West_DDgate}. The number of pulses
used in CDD, however, increases exponentially with the decoupling
order. Since errors are inevitably introduced in each control pulse
in experiments, finding DD schemes with fewer control pulses is
desirable.

For suppressing pure dephasing of single qubits (two-level systems)
subjected to unidirectional noises, a remarkable advance is the
optimal DD discovered by Uhrig~\cite{Uhrig2007UDD} in a spin-boson
model. Uhrig DD (UDD) is optimal in the sense that the number of
control pulses is minimum for a given decoupling order. It was later
conjectured~\cite{Lee2008_PRL,UhrigNJP2008} and then rigorously
proven~\cite{Yang2008PRL} that UDD is model-independent for any
two-level systems coupled to a finite quantum bath. It was shown
that UDD also works for suppressing longitudinal
relaxation~\cite{Yang2008PRL}. The ideal $\delta$-pulses assumed in
DD can be generalized to include some components of finite
amplitude~\cite{Yang2008PRL}. Recently, a method to incorporate
shaped pulses of finite amplitude into UDD paves the way of
realistic experiments~\cite{Uhrig2010}. UDD was first verified in
experiments by microwave control of trapped ions in various
artificial classical
noises~\cite{Biercuk2009QuantumMemory,Biercuk2009a,Uys2009}, and
then UDD against realistic quantum noises was realized for radical
electron spins in irradiated malonic acid
crystals~\cite{Du2009Nature}.

For suppressing the general decoherence  of single qubits (including
both pure dephasing and longitudinal relaxation), the concatenation
of UDD sequences (CUDD) was proposed to reduce the number of control
pulses~\cite{uhrig:CUDD}. For suppressing the decoherence up to an
order $N$, the number of pulses required in CUDD is
$\sim(N+1)2^{N},$ which is considerably less than $\sim4^{N}$ as in
CDD. Recently, West \textit{et al} proposed a much more efficient
scheme, called quadratic DD (QDD), to combat general decoherence of
a qubit~\cite{West2010}. QDD is constructed by nesting two levels of
UDD sequences, using $(N+1)^2$ control intervals to achieve the
$N$th decoupling order. Numerical search indicates that QDD is
near-optimal as it differs from the optimal solutions by no more
than two pulses for a small decoupling order
($N\leq4$)~\cite{West2010}. The validity of UDD can be extended to
analytically time-dependent Hamiltonians~\cite{Pasini2010UDDTime}.
This extension seems to validate QDD since the UDD sequence on the
outer level can be viewed as acting on a time-dependent Hamiltonian
resulting from the UDD control on the inner level. However, as we
will show in Sec.~\ref{sec:Qubit}, the effective Hamiltonian
resulting from the inner UDD sequences is only piecewise analytic in
time, and actually there are counter examples in which UDD on the
outer level does not achieve the designed decoupling order if the
order of UDD on the inner level is {\em odd} and {\em lower} than
the order of the outer UDD. Thus it remains an open question why QDD
works. In the attempt to prove the validity of QDD, we establish a
theorem: UDD applies to time-dependent Hamiltonians, regardless of
their analytic properties, as long as the Hamiltonians between two
adjacent pulses of the UDD sequence are symmetric and have the same
function form of relative time between the adjacent pulses (see
Sec.~\ref{sub:NestedDD}). Therefore we give a proof of the validity
of QDD with even order UDD on the inner level. We note that the
validation of QDD is still {incomplete} since the theorem mentioned
above does not apply to QDD with odd order UDD on the inner level.

So far, the research of optimal DD focuses on the single-qubit
decoherence problem, with some attempts on optimal DD to multi-level
systems with prior knowledge of the initial
states~\cite{Dhar2006_PRL,MultilevelUDD:2010}. For practical
large-scale quantum computing, the issue is the decoherence, or more
generally, the decay of quantum correlations (such as entanglement)
of coupled multi-qubit systems. Therefore it is highly desirable to
have a general arbitrary-order DD scheme for multi-qubit systems,
with the number of pulses as small as possible. In addition, the DD
scheme should preferably involve relatively easy implementations,
for example, single-qubit operations. Realizations of the
symmetrization procedure~\cite{Zanardi:99PLA,Viola1999_PRL} by
averaging over so-called {\em nice} error
bases~\cite{Knill1996ErrorBases} are explicitly given in
Ref.~\cite{Wocjan02QIC}. However, each control operation in general
is complicated and may involve manipulation on all qubits for
multi-qubit systems. It would be of practical interest if there is
an explicit, systematic, and efficient way to protect a particular
set of operators of a quantum system to an arbitrary order.

In this paper, we give systematic and explicit DD schemes to protect
multi-qubit systems arbitrarily coupled to quantum baths. The
schemes are realized by protecting a set of mutually commuting or
anti-commuting unitary Hermitian system operators on different
levels. We call it a mutually orthogonal operation set (MOOS) in
this paper. For example, the Pauli operators of qubits form an MOOS.
The inner levels of DD control of operators in an MOOS are not
affected by the outer levels of DD control. Furthermore, based on
the above-mentioned theorem on DD of time-dependent systems, higher
order protection of an MOOS can be achieved by nesting even-order
UDD sequences on different levels. If a set of system operators is
protected by such nested UDD (NUDD), then all system operators in
the Lie algebra generated from this set of operators are protected
to the same decoupling order, which indeed generates an effective
symmetry group~\cite{Zanardi:99PLA} of the system up to an error of
the decoupling order. For multi-qubit systems, each control
operation in DD only involves single-qubit manipulation. In
addition, we will show that NUDD can be implemented with pulses of
finite amplitude, which approximate ideal $\delta$-pulses up to an
error in the second order of the pulse durations, with the same
pulse shaping as in Ref.~\cite{PasiniPRA2008}. For a general
multi-level quantum system, we can also construct an MOOS and use
NUDD to generate an effective symmetry group to a given decoupling
order. It can be shown, however, that for a general $M$-level
system, there may exist no MOOS to generate the whole basis of
su($M$) algebra and hence the whole SU($M$) symmetry group. Further
research is still needed to design efficient DD schemes (as compared
with CDD) to protect general multi-level systems to higher orders.

This paper is organized as follows. In
Sec.~\ref{sec:Preserving-arbitrary-quantum}, we present a general
theory on protection of an MOOS by DD; NUDD is given based on a
theorem established for UDD on time-dependent systems. In this
Section, the pulses in DD are assumed instantaneous. In
Sec.~\ref{sec:Qubit}, we discuss NUDD on multi-qubit systems. In
Sec.~\ref{sec:SoftPulse}, we discuss DD with finite-amplitude
pulses. Finally, we draw the conclusions in
Sec.~\ref{sec:Conclusions}.

\section{Protection of system operators\label{sec:Preserving-arbitrary-quantum}}

\subsection{General formalism: MOOS\label{sub:MOOS}}
We consider a quantum system coupled to a general finite quantum
bath, with a time-independent Hamiltonian
\begin{equation}
H=H_S+H_B+H_{SB},
\end{equation}
where $H_S$ is the system Hamiltonian, $H_B$ the bath one, and
$H_{SB}$ the system-bath interaction. We aim to
find a sequence of stroboscopic operations $O_1$, $O_2$, $\ldots$,
$O_q$ at times $T_1$, $T_2$, $\ldots$, $T_q$ in increasing order
so that after the controlled evolution from $t=0$ to $t=T$, a set of system operators
$$\{Q_j\}\equiv\{Q_1,Q_2,\ldots\},\nonumber$$ are conserved up to an
error of $O\left(T^{N+1}\right)$, i.e., for {($\hbar=1$)}
\begin{equation}
U\equiv e^{-iH(T-T_q)}O_q\cdots O_2e^{-iH(T_2-T_1)}O_1e^{-iHT_1},
\end{equation}
we have
\begin{equation}
U^{\dag}Q_jU=Q_j+O\left(T^{N+1}\right).
\end{equation}
In this paper we assume that the bath is bounded in spectrum so that
the perturbation expansion of $U$ in a short time $T$ is possible.
To make DD efficient, it is required that the evolution of the
system induced by the decoupling field be faster than the unwanted
dynamics~\cite{Viola1998_PRA}. Here we assume that the control
pulses are instantaneous and arbitrarily strong. Using pulses of
finite duration and finite amplitude will be discussed in
Sec.~\ref{sec:SoftPulse}. Preferably, for a multi-qubit system, the
operations $O_j$ should contain only single-qubit operations. We
also wish to use as few as possible operations to achieve a given
order ($N$) of decoupling precision.

We note that when a set of operators $\{Q_j\}$ is preserved to a
certain decoupling order, then all operators obtained by commutation
$i[Q_j,Q_k]$, anti-commutation $[Q_j,Q_k]_+\equiv Q_jQ_k+Q_kQ_j$,
linear combinations, and their repetitions are also protected to the
same decoupling order. All these protected operators form a Lie
algebra, since they form a linear vector space and are close under
the commutation operation. The Lie algebra defines a dynamically
generated effective symmetry group ${\mathcal S}_{Q}$ of the system
up to the decoupling order.

In particular, let us consider the protection of a set of operators
$\{\Omega_j\}$ in which each pair of elements either commutes or
anti-commutes. In our schemes, it is required that the operators
$\{\Omega_j\}$ be unitary and satisfy $\Omega_j^2=\pm1$. Thus we
choose $\Omega_j$ to be unitary and Hermitian, i.e.,
\begin{equation} \Omega_j^2=\Omega^{\dag}_j\Omega_j=1. \label{UH}
\end{equation}
Note that $\Omega_j$ is a parity kick operator described in
Ref.~\cite{Vitali1999PRA}. We will use the unitary Hermitian
property of $\Omega_j$ to construct DD for protection of $\Omega_j$.
\begin{definition}{An MOOS is defined as a set of operators which are unitary and
Hermitian and have the property that each pair of elements either
commutes or anti-commutes.}
\end{definition}

The commutation property of operators in an MOOS is important for
constructing higher-order DD schemes, via the following theorem.
\begin{thm}
\label{thm:MOOS} For two operators $\Omega_1$ and $\Omega_2$ either
commuting or anti-commuting with each other, if a unitary evolution
$U(T)$ during a short time $T$ commutes with $\Omega_1$ up to an
error of $O\left(T^{N+1}\right)$, then $\Omega_2U(T)\Omega_2$ also
commutes with $\Omega_1$ up to an error of $O\left(T^{N+1}\right)$.
\end{thm}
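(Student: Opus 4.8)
The plan is to work directly with the hypothesis that $U(T)$ commutes with $\Omega_1$ up to error $O(T^{N+1})$, writing this as $[U(T),\Omega_1] = R$ where $R = O(T^{N+1})$, equivalently $\Omega_1 U(T) = U(T)\Omega_1 + R$. The goal is to control the commutator $[\Omega_2 U(T)\Omega_2,\,\Omega_1]$ and show it too is $O(T^{N+1})$. First I would exploit the defining properties of an MOOS: since $\Omega_2$ is unitary and Hermitian, $\Omega_2^2 = 1$, so conjugation by $\Omega_2$ is an involution; and since $\Omega_1$ and $\Omega_2$ either commute or anticommute, we have $\Omega_2\Omega_1\Omega_2 = \pm\Omega_1$, with the sign fixed by which case holds.

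The key computation is then a short algebraic manipulation. I would expand
\begin{equation}
\Omega_2 U(T)\Omega_2\,\Omega_1 - \Omega_1\,\Omega_2 U(T)\Omega_2,
\end{equation}
and insert factors of $\Omega_2^2 = 1$ to pull $\Omega_1$ through the conjugating $\Omega_2$'s. Concretely, using $\Omega_1 = \pm\,\Omega_2\Omega_1\Omega_2$ one rewrites the product $\Omega_1\,\Omega_2 U(T)\Omega_2 = \pm\,\Omega_2\Omega_1 U(T)\Omega_2$ and similarly $\Omega_2 U(T)\Omega_2\,\Omega_1 = \pm\,\Omega_2 U(T)\Omega_1\Omega_2$. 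The crucial point is that both sign factors are identical (they are the same $\pm$ coming from the single commutation/anticommutation relation between $\Omega_1$ and $\Omega_2$), so when I factor out $\Omega_2(\cdot)\Omega_2$ I am left with $\pm\,\Omega_2\,[U(T),\Omega_1]\,\Omega_2 = \pm\,\Omega_2 R\,\Omega_2$.

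The final step is to observe that conjugation by the unitary $\Omega_2$ preserves operator norm, so $\Omega_2 R\,\Omega_2$ has the same order of smallness as $R$ itself, namely $O(T^{N+1})$. Hence $[\Omega_2 U(T)\Omega_2,\,\Omega_1] = O(T^{N+1})$, which is exactly the claim. I expect the whole argument to be essentially a one-line identity once the bookkeeping of the sign is handled, so there is no serious analytic obstacle; the only thing requiring care is verifying that the two sign factors genuinely coincide rather than potentially cancelling, and confirming that the error term $R$ is transported faithfully under conjugation (which follows from unitarity of $\Omega_2$). I would present the identity with the $\pm$ written explicitly and note that the result holds uniformly in both the commuting and anticommuting cases.
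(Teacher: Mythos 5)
Your proposal is correct and follows essentially the same route as the paper's own one-line proof, which computes $[\Omega_1,\Omega_2U(T)\Omega_2]=\pm\,\Omega_2[\Omega_1,U(T)]\Omega_2=O(T^{N+1})$ using exactly the sign-commutation relation $\Omega_1\Omega_2=\pm\Omega_2\Omega_1$ and $\Omega_2^2=1$ that you invoke. Your added remark that conjugation by the unitary $\Omega_2$ preserves the order of the error term is a point the paper leaves implicit, but there is no substantive difference in approach.
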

\begin{proof}
We directly calculate the commutator
$$\left[\Omega_1,\Omega_2U(T)\Omega_2\right]=(-1)^\eta\Omega_2\left[\Omega_1,U(T)\right]\Omega_2
=O\left(T^{N+1}\right),$$ where $\eta=+1$ ($-1$) for $\Omega_1$
commuting (anti-commuting) with $\Omega_2$.
\end{proof}
With this theorem, certain DD sequences protecting a given operator
in an MOOS can be used as units to construct an outer level of DD
protection of another operator in the MOOS, without affecting the DD
effect of the inner level control. For operators which do not form
an MOOS, the outer level control in general may interfere with the
inner level control. For example, let us consider
$\Omega_1=\sigma_x$ and
$\Omega_2=\left(\sigma_x+\sigma_y\right)/\sqrt{2}$ (the Pauli matrix
along the direction equally dividing the angle between the $x$- and
$y$- axes). Suppose a DD sequence protects $\sigma_x$ as
$U(T)=e^{-i\sigma_x T+ O(T^{N+1})}$. Then we apply a Hahn echo to
protect $\Omega_2$, by the following evolution
$$
U_2(2T)=\Omega_2U(T)\Omega_2U(T)=e^{-i\sigma_y T+
O(T^{N+1})}e^{-i\sigma_x T+ O(T^{N+1})}.
$$
After the control on the outer level, the evolution actually does
not commute with $\sigma_x$ even in the leading order of $T$.

As we will show in Sec.~\ref{sub:FirstOrder}, applying an element of
an MOOS corresponds to a symmetrization
procedure~\cite{Zanardi:99PLA,Viola1999_PRL} on one level, and
protecting an MOOS will iteratively symmetrize the system evolution.
An MOOS itself does not form a group or an algebra. However, it can
generate a certain Lie algebra by commutation, anti-commutation,
linear combinations, and repetitions. Therefore the protection of
all operators in this Lie algebra is realized by the protection of
this MOOS. For example, in a single-qubit system, $N$ pulses of
$\pi$-rotation control of the qubit about the $z$-axis, when
arranged according to the UDD timing, protect the Pauli matrix
$\sigma_z$ and dynamically generate the symmetry group
$\text{SO}(2)$ or $\text{U}(1)$, up to an error of
$O(T^{N+1})$~\cite{Yang2008PRL}. For a pure dephasing Hamiltonian,
which has $\text{U}(1)$ as its intrinsic symmetry group, the system
under the UDD protection has the full $\text{SU}(2)$ symmetry.
Another example is the $N$th order QDD sequence~\cite{West2010}
consisting of pulses of $\sigma_x$ and $\sigma_y$, which protects
$\sigma_x$ and $\sigma_y$ and hence all operators in the
$\text{su}(2)$ algebra (including $\sigma_z$), and therefore the
qubit has a dynamically generated symmetry group $\text{SU}(2)$, up
to an error of $O\left(T^{N+1}\right)$.

Below we explicitly construct the MOOS for some particular Lie
algebra to be protected:
\begin{enumerate}
\item
For suppressing longitudinal relaxation of a single-qubit system
along the $z$-direction, the MOOS contains only one operator
$\sigma_z$.
\item
For a general single-qubit system, two anti-commuting Pauli
matrices, e.g., $\{\sigma_x,\sigma_y\}$, form an MOOS to protect all
system operators~\cite{Duan:1998km}.
\item
For an $L$-qubit pure dephasing model in which $H$ contains only the
Pauli matrices along the $z$-direction, a choice of the MOOS is
$$\{\sigma_{x}^{(l)}\}_{l=1}^{L} \equiv \{\sigma_{x}^{(1)}, \sigma_{x}^{(2)}, \ldots, \sigma_{x}^{(L)}\}.$$
Note that only single-qubit operators are used. It is obvious that
all the Pauli matrices in the set commute with each other. There are
totally $L$ operators in the MOOS, which can be shown to be the
minimum possible number.
\item
For a general $L$-qubit system, to protect all operators of the
system, i.e., the algebra $\text{su}(2^L)$, a choice of the MOOS is
$\{\sigma^{(l)}_z, \sigma^{(l)}_x\}_{l=1}^{L}$. It is obvious that
all the Pauli matrices either commute or anti-commute with each
other. Note that only single-qubit operators are used. There are
totally $2L$ operators in the MOOS, which can be shown to be the
minimum possible number.
\item
For a general $M$-level system, let us consider protection of all
system operators which are diagonal in a certain orthonormal basis
$\{|m\rangle\}_{m=0}^{M-1}$. Suppose $2^{L-1}<M\le 2^L$. We can
denote an integer number $0\le m <M$ using a binary code as
$m=(m_{L}\cdots m_{2}m_{1})$ with $m_l=0$ or $1$. We define a
unitary Hermitian operator
\begin{align}
\Sigma_z^{(l)}=I-2\sum_{m_l=1}|m\rangle\langle m|.
\label{Sigmaz}
\end{align}
    A diagonal operator of the form $|m\rangle\langle m|$ can be written in terms of $\{\Sigma^{(l)}_z\}$ as
    $$|m\rangle\langle m|=\prod_{l=1}^L\frac{I+(-1)^{m_l}\Sigma^{(l)}_z}{2}.$$
    Thus any diagonal operators can be constructed using $\{\Sigma^{(l)}_z\}$.
    Actually, by adding $(2^L-M)$ ancillary basis states $\{|m\rangle\}_{m=M}^{2^L-1}$,
    the above-defined operators can be viewed as the single-qubit Pauli matrices along the $z$-direction
    ($\{\sigma^{(l)}_z\}$) of an $L$-qubit system projected to the $M$-level subsystem. It is obvious that all operators in $\{\Sigma^{(l)}_z\}$ commute with each other.
    Thus an MOOS is constructed. There are totally $L$ such operators in the MOOS.
    Since an $(L-1)$-qubit system has at least $(L-1)$ operators in an MOOS and $2^{L-1}<M \leq 2^L$, the MOOS
    $\{\Sigma^{(l)}_z\}_{l=1}^{L}$ contains the minimum number of operators (otherwise,
    we can construct a DD with less than $M$ intervals to protect the system to the first decoupling order, which is
    impossible according to Ref.~\cite{Wocjan02QIC}).
\item
For a given $l$, if $M/2^{l}$ is an integer, we can define unitary
Hermitian operators anti-commuting with $\Sigma_z^{(l)}$ as
\begin{align}
 \Sigma_x^{(l)} =& \sum_{m_l=0}\left(|m+2^{l-1}\rangle\langle m|+ \text{h.c.} \right),
\label{Sigmax}
\end{align}
which exchanges two basis states $|m\rangle$ and $|m'\rangle$ if $m$
and $m'$ differ at and only at the $l$th bit. Actually, the operator
$\Sigma^{(l)}_x$ can be viewed as the Pauli matrix $\sigma^{(l)}_x$
of the $l$th qubit for a multi-qubit system.
$\{\Sigma_x^{(l)}|M\text{~mod~}2^l=0 \} \cup \{\Sigma_z^{l}|2^l\le
M\}$ forms an MOOS of the $M$-level system.
\end{enumerate}

\noindent Note 1: The choice of an MOOS for a certain system is not
unique. For example, in the MOOS for generating $\text{su}(2^L)$ of
a general $L$-qubit system, the two Pauli matrices $\sigma^{(l)}_z$
and $\sigma^{(l)}_x$ can be replaced with any two anti-commuting
Pauli matrices of the $l$th qubit.

\noindent Note 2: For a general $M$-level system, there may exist no
MOOS to generate the whole su($M$) algebra. For example, for an
$M$-level system with $M$ being an odd number, all operators in an
MOOS must mutually commute and therefore are all diagonal in a
common basis. This MOOS of course cannot generate the whole su($M$).
Actually, if two operators $\Omega$ and $\Omega'$ in an MOOS
anti-commute, we have
${\text{Tr}}\left(\Omega+\Omega'\Omega\Omega'\right)=2\text{Tr}(\Omega)=0$,
which is impossible for odd $M$.

For a general $M$-level system, the explicit operations given in
Eqs.~(\ref{Sigmaz}) and (\ref{Sigmax}) are in general difficult to
implement in experiments. It is important to find a suitable set of
operations for a given multi-level system. It should also be noted
that above we have assumed that all the operators in an MOOS are
protected to the same decoupling order. In the CDD and NUDD schemes
we will discuss later, different operators actually can be protected
to different orders. In those cases, the precision of protection of
the whole Lie algebra is determined by the lowest decoupling order.

\subsection{Lowest order protection of system operators\label{sub:FirstOrder}}

A general DD framework to protect a set of operators to the lowest
order is the symmetrization procedure over an appropriate DD
group~\cite{Zanardi:99PLA,Viola1999_PRL}. Here we systematically
give an explicit scheme to protect system operators forming an MOOS,
which facilitates the construction of higher order DD in the later
part of this paper.

Let us consider first protection of a single unitary Hermitian
operator $\Omega$. The Hamiltonian can be separated into two parts,
\begin{equation} H=C_{\Omega}+A_{\Omega},
\end{equation}
with
\begin{subequations}
\begin{align}
C_{\Omega}\equiv(H+\Omega H\Omega)/2,\\
A_{\Omega}\equiv(H-\Omega H\Omega)/2.
\end{align}
\end{subequations}
$\Omega$ commutes with $C_{\Omega}$ and anti-commutes with
$A_{\Omega}$, i.e.,
\begin{subequations}
\begin{align}
\Omega C_{\Omega}\Omega=C_{\Omega},\\
\Omega A_{\Omega}\Omega=-A_{\Omega}.
\end{align}
\end{subequations}

With an instantaneous control pulse $\Omega$ applied at the middle of
the evolution time, the evolution operator becomes
\begin{align}
U_{\Omega}(T) & =e^{-iHT/2}\Omega e^{-iHT/2}=\Omega e^{-i\Omega H\Omega T/2}e^{-iHT/2}
\nonumber \\
& =\Omega e^{-iC_{\Omega}T}+O(T^{2}),
\end{align}
which commutes with $\Omega$ up to an error of $O(T^{2})$. In
general, one may apply an additional pulse at the end of evolution
so that
\begin{equation}
[\Omega] U_{0}(T/2)\Omega U_{0}(T/2)\equiv e^{-iH_{\Omega}T},
\label{eq:Uomega1st}
\end{equation}
where $U_{0}(\tau)\equiv e^{-iH\tau}$ is the free evolution operator
over time $\tau$, and the brackets around the operation at the end
of the sequence ($[\Omega]$) mean that the operation is optional.
The effective Hamiltonian $H_{\Omega}$ commutes with $\Omega$ up to
$O(T)$. Therefore $\Omega$ is protected to the first order.

Following the method given above, we can preserve more operators
$\{\Omega_{k}\}$ in an MOOS by concatenation. The first level of
control is \begin{equation}
U_{1}(T)=[\Omega_{1}]U_{0}(T/2)\Omega_{1}U_{0}(T/2)=e^{-iH_{1}T},\label{eq:U1T}\end{equation}
 where the effective Hamiltonian $H_{1}$ commutes with $\Omega_{1}$
up to $O(T)$. By treating the effective Hamiltonian as a new
Hamiltonian on the second level, the propagator reads
\begin{equation}
U_{2}(T)=[\Omega_{2}]U_{1}(T/2)\Omega_{2}U_{1}(T/2)=e^{-iH_{2}T},\label{eq:U2T}
\end{equation}
where the effective Hamiltonian $H_{2}$ commutes with $\Omega_{2}$
up to an error of $O(T)$. And according to Theorem~\ref{thm:MOOS},
$H_{2}$ also commutes with $\Omega_1$ up to $O(T)$.  A general first
order scheme is achieved by using Eq.~(\ref{eq:Uomega1st})
iteratively,
\begin{equation}
U_{L}(T)=[\Omega_{L}]U_{L-1}(T/2)\Omega_{L}U_{L-1}(T/2)=e^{-iH_{L}T},
\label{eq:UkT1}
\end{equation}
where the effective Hamiltonian $H_{L}$ commutes with all operators
in the MOOS $\{\Omega_{l}\}_{l=1}^{L}$ up to $O(T)$. Note that when
the optional pulses are not used, only one pulse is applied at each
time of operation.

It should be pointed out that the current DD scheme constructed by
iteration is not the same as the symmetrization procedure described
in Refs.~\cite{Zanardi:99PLA} and \cite{Viola1999_PRL}. For
$M$-level systems, when $\log_{2}M$ is not an integer, our scheme in
general cannot achieve protection of all system operators. For
example, when $M$ is odd, our scheme can only protect operators
diagonal in a certain basis. When $M=2^L$, all system operators can
be protected by protecting the MOOS of size $2L$. Several advantages
of the MOOS-based DD, however, are worth mentioning. First, there
are systematic ways to construct higher order DD for protecting all
operators in an MOOS and hence the Lie algebra generated from the
MOOS (Secs.~\ref{sub:GeneralCDD} and \ref{sub:NestedDD}). Second,
for multi-qubit systems, our scheme automatically gives DD sequences
involving only single-qubit operations (Sec.~\ref{sec:Qubit}).
Third, there is an explicit scheme to incorporate pulses of finite
amplitude to DD (Sec.~\ref{sec:SoftPulse}). Fourth, since the DD
sequences on the inner levels use more control pulses than the
sequences on the outer levels, qubits subjected to faster error
sources can be protected on inner levels for economic use of control
resources. For example, in a coupled electron-nuclear spin system,
the electron spin, which has much faster decoherence than a nuclear
spin does, should be controlled on the inner level.

\subsection{Higher order protection by CDD\label{sub:GeneralCDD}}

To go beyond the lowest order protection, let us consider a general
sequence of unitary operations $\{\sigma_{k}\}$ on the system, where
$\sigma_{k}$ can be the identity operator. The evolution from $t=0$
to $T$ reads
\begin{eqnarray}
U_{C}(T) & = & \sigma_{k_{n}}^{\dagger}e^{-iH(T-t_{n})}\sigma_{k_{n}}\cdots\sigma_{k_{1}}^{\dagger}e^{-iH(t_{2}-t_{1})}\sigma_{k_{1}}\sigma_{k_{0}}^{\dagger}e^{-iHt_{1}}\sigma_{k_{0}}\nonumber \\
 & \equiv & \mathcal{T}\exp\left[-i\int_{0}^{T}H_{\sigma}(t)dt\right],
\end{eqnarray}
where $\mathcal{T}$ is the time ordering operator, and
$H_{\sigma}(t)\equiv\sigma_{k_{j}}^{\dagger}H\sigma_{k_{j}}$ for
$t\in(t_{j},t_{j+1}]$. In the standard time-dependent perturbation
theory formalism, the propagator is expanded up to the second order
as
\begin{equation}
U_{C}(T)=1+\sum_{i}h_{i}+\sum_{i>j}h_{i}h_{j}+\sum_{i}\frac{1}{2}h_{i}^{2}+O(T^{3}),
\label{eq:UsigmaExpansion}
\end{equation}
 where $h_{j}\equiv-i(t_{j+1}-t_{j})H_{\sigma}(t)$ (with $t_0\equiv 0$ and $t_{n+1}\equiv T$).
In this paper we assume that this expansion converges, which means
that the bath is bounded in spectrum.

It was shown that if $U_{C}(T)$ realizes the first order DD, i.e.,
$[\sum_{i=0}^{n} h_i , Q_j]=0$ for a set of operators $\{Q_j\}$,
then the second order DD can be realized by the symmetrized
evolution~\cite{Ernst1987_Book,Viola1999_PRL} \begin{align}
U_{\text{SDD}}(2T)&\equiv \bar{U}_{C}U_{C} \nonumber\\
&=1+2\sum_{i=0}^{n}h_{i}+\frac{1}{2!}\left(2\sum_{i=0}^{n}h_{i}\right)^{2}+O(T^{3}),
\end{align}
where $\bar{U}_{C}\equiv e^{h_{0}}e^{h_{1}}\cdots
e^{h_{n-1}}e^{h_{n}}$ is mirror-symmetric with $U_{C}$.

For DD to even higher orders, in principle we can obtain the optimal
sequences by solving Eq.~(\ref{eq:UsigmaExpansion}) so that up to
some order in the expansion, $U_{C}(T)$ commutes with a given set of
operators. The third order terms read
$$ \sum_{i>j>k}h_{i}h_{j}h_{k}+\sum_{i>j}\frac{1}{2!}h_{i}^{2}h_{j}+\sum_{i>j}\frac{1}{2!}h_{i}h_{j}^{2}+\sum_{i}\frac{1}{3!}h_{i}^{3},
$$
and in general
$\sum_{j_{1}>j_{2}>\cdots>j_{n}}h_{j_{1}}^{p_{1}}h_{j_{2}}^{p_{2}}\cdots
h_{j_{n}}^{p_{n}}\prod_{r=1}^{n}\frac{1}{p_{r}!}$ contains terms of
the order $\sum_{r=1}^{n}p_{r}$. Finding solutions becomes
formidable when the DD order is high.

If we are not concerned with the exponentially increasing number of
control pulses, we can follow the idea of
CDD~\cite{Khodjasteh:2007PRA,Santos2008NJP} to construct DD
sequences in a systematic way to protect the operators
$\{\Omega_{l}\}_{l=1}^{L}$ to an arbitrary DD order. The first order
DD given by Eq.~(\ref{eq:UkT1}) is $2^{L}$ evolution operators
$U(T/2^{L})$ embedded in a sequence of control pulses
$\{\Omega_{l}\}$. We denote this structure as
\begin{align}
U_{L}^{[1]}(T)\equiv&\Omega_{L}U_{L-1}(T/2)\Omega_{L}U_{L-1}(T/2)\nonumber\\
\equiv&\mathcal{C}_{\Omega}\left\{U(T/2^{L})\right\}=e^{-iH_{L}^{[1]}T}\label{eq:CDDway1UL1},
\end{align}
which is the first order CDD as defined in Eq.~(\ref{eq:UkT1}). The
resultant first-order effective Hamiltonian $H_L^{[1]}$ commutes
with the operators $\{\Omega_{l}\}_{l=1}^{L}$ up to an error of
$O(T)$. Here the sequence
$\mathcal{C}_{\Omega}\left\{\cdots\right\}$ makes the effective
Hamiltonian commute with $\{\Omega_{l}\}_{l=1}^{L}$ to a higher
order. The evolution under the second order CDD
\begin{equation}
U_{L}^{[2]}(T)=\mathcal{C}_{\Omega}\left\{U_{L}^{[1]}(T/2^{L})\right\}=e^{-iH_{L}^{[2]}T},
\end{equation}
is obtained by replacing the free evolution $U(T/2^L)$ in
Eq.~(\ref{eq:CDDway1UL1}) with $U_{L}^{[1]}(T/2^{L})$, an evolution
operator under the first order CDD control. The resultant effective
Hamiltonian $H_{L}^{[2]}$ commutes with $\{\Omega_{l}\}_{l=1}^{L}$
up to an error of $O(T^2)$. Iteratively, the $N$th order CDD
reads\begin{equation}
U_{L}^{[N]}(T)=\mathcal{C}_{\Omega}\left\{U_{L}^{[N-1]}(T/2^{L})\right\},\label{eq:CDDway1}
\end{equation}
which preserves any operators in $\{\Omega_{l}\}_{l=1}^{L}$ up to
$O(T^{N+1})$. Note that in the above construction of CDD,
Theorem~\ref{thm:MOOS} has not been invoked, except in construction
of the innermost level as in Eq.~(\ref{eq:UkT1}). The only
requirement is that all the intervals of the outer level are equal
so that the resultant effective Hamiltonian from the inner level of
control is time-independent.

An alternative construction of CDD is given as follows. We first
construct a CDD sequence to protect $\Omega_{1}$ up to an error of
$O\left(T^{N_1+1}\right)$ by recursion
\begin{equation}
U_{\Omega_1}^{\text{C}[N_{1}]}(T)=\Omega_{1}U_{\Omega_1}^{\text{C}[N_{1}-1]}(T/2)\Omega_{1}U_{\Omega_1}^{\text{C}[N_{1}-1]}(T/2),
\end{equation}
with $U_{\Omega_1}^{\text{C}[0]}(T)\equiv U(T)$ and the superscript
$\text{C}$ denoting the nesting scheme of CDD. By defining
$U_{\Omega_1,\Omega_2}^{\text{C}[N_{1},0]}(T)\equiv
U_{\Omega_1}^{\text{C}[N_{1}]}(T)$, we can construct a further level
of CDD to protect $\Omega_{2}$ up to $O\left(T^{N_2+1}\right)$, by
the recursion
\begin{equation}
U_{\Omega_1,\Omega_2}^{\text{C}[N_{1},N_{2}]}(T)=\Omega_{2}U_{\Omega_1,\Omega_2}^{\text{C}[N_{1},N_{2}-1]}(T/2)\Omega_{2}
U_{\Omega_1,\Omega_2}^{\text{C}[N_{1,}N_{2}-1]}(T/2).
\end{equation}
Similarly, we have the propagator by recursion
\begin{subequations}
\begin{align}
U_{\Omega_1,\Omega_2,\ldots,\Omega_l,\Omega_l+1}^{\text{C}[N_{1},N_2,\ldots,N_{l},0]}(T) &\equiv  U_{\Omega_1,\Omega_2,\ldots,\Omega_l}^{\text{C}[N_{1},N_2,\ldots,N_{l}]}(T), \\
U_{\Omega_1,\Omega_2,\ldots,\Omega_L}^{\text{C}[N_{1},\ldots,N_{L}]}(T)
=&
\Omega_{L}U_{\Omega_1,\Omega_2,\ldots,\Omega_L}^{\text{C}[N_{1},\ldots,N_{L-1},N_{L}-1]}(T/2)\Omega_{L}
\nonumber \\ &\times
U_{\Omega_1,\Omega_2,\ldots,\Omega_L}^{\text{C}[N_{1},\ldots,N_{L-1},N_{L}-1]}(T/2).
\end{align}
\label{eq:CDDway2}
\end{subequations}
According to Theorem~\ref{thm:MOOS}, the inner levels of DD are
unaffected by the outer levels of control. Thus the evolution
$U_{\Omega_1,\Omega_2,\ldots,\Omega_L}^{\text{C}[N_{1},N_{2},\ldots,N_{L}]}(T)$
commutes with $\Omega_{1}, \Omega_{2}, \ldots, \Omega_{L}$ to the
orders $N_{1}, N_{2}, \ldots, N_{L}$, in turn. An advantage of this
construction is that the errors induced by $\Omega_{l}$ are
eliminated independently and to different orders $\{N_{l}\}$, which
allows protecting operators with stronger error sources to higher
orders. For example, usually for spin qubits under strong external
magnetic field, the pure dephasing is much faster than the
population relaxation, so it is favorable to protect the phase
correlation on the inner level and to a higher CDD order.

The number of operations required in preserving the operators is
$\sim 2^{NL}$ for the CDD scheme in Eq.~(\ref{eq:CDDway1}) or $\sim
2^{\sum_{l=1}^{L}N_{l}}$ for that in Eq.~(\ref{eq:CDDway2}). They
increase exponentially with the DD order. Even though the
exponentially increasing number of control pulses does yield
significant improvement of precision (through reduction of the
coefficient in front of the power of time
$T^{N+1}$)~\cite{Khodjasteh2005_PRL,Liu2007_NJP,Lidar_DDgate},
implementation of CDD to high orders is challenging in experiments
since errors are inevitably introduced in each control pulse.

\subsection{Higher order protection by NUDD\label{sub:NestedDD}}

If there is only one unitary Hermitian operator $\Omega$ to be
preserved, $N$ operations of $\Omega$ applied at the UDD timing~\cite{Uhrig2007UDD}
\begin{equation}
T_{n}=T \sin^{2}\frac{n\pi}{2N+2},\text{~for~}n=1,\ldots,
N,\end{equation} during the evolution from $T_{0}\equiv 0$ to
$T_{N+1}\equiv T$ protect the physical quantity $\Omega$ to the
$N$th order~\cite{Yang2010ReviewDD}. Explicitly, the propagator
under control
\begin{equation}
U_{\Omega}^{\text{U}[N]}(T)\equiv\Omega^{N}U_{0}(\tau_{N})\cdots\Omega
U_{0}(\tau_{1})\Omega U_{0}(\tau_{0}),
\end{equation}
commutes with $\Omega$ up to an error of $O(T^{N+1})$. Here the
evolution intervals are
\begin{equation}
\tau_{n}\equiv
T_{n+1}-T_{n}=\frac{T}{2}\left[\cos\frac{n\pi}{N+1}-\cos\frac{(n+1)\pi}{N+1}\right].\label{eq:UDDtiming}\end{equation}
When there are more than one unitary Hermitian operators
$\{\Omega_{l}\}_{l=1}^L$ to be protected, the question is whether we
can construct NUDD so that the number of control pulses scales
polynomially with the protection order. A known example is QDD, in
which UDD of, e.g., $\sigma_{z}$ and $\sigma_{x}$ are nested. For
general cases, we establish the following theorem as the basis of
NUDD.

\subsubsection{A theorem on UDD control of time-dependent {systems}}

\begin{thm}
\label{thm:Ht}
For a finite-norm time-dependent Hamiltonian $H(t)$
defined in $[0,T]$, an $N$th order UDD control with $N$ operations
of unitary Hermitian operator $\Omega$ applied at $T_{1}$, $T_{2}$,
$\ldots$, $T_{N}$ preserves $\Omega$ up to an error of
$O\left(T^{N+1}\right)$, if\begin{align}
H\left(T_{n}+s\tau_{n}\right)=H\left(T_{n+1}-s\tau_{n}\right)=H(sT_{1}),
\label{eq:HtSCondit}
\end{align}
for $s\in[0,1]$ and $\tau_{n}=T_{n+1}-T_{n}$, i.e., the Hamiltonian
has the same form as a function of the relative time between
adjacent operations and is symmetric within each interval.
\end{thm}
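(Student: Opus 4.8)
The plan is to pass to the toggling frame defined by the pulse operator $\Omega$ and to reduce the claim to the vanishing, through order $T^{N}$, of the part of the controlled propagator that anti-commutes with $\Omega$. Writing $H(t)=C(t)+A(t)$ with $C(t)\equiv[H(t)+\Omega H(t)\Omega]/2$ and $A(t)\equiv[H(t)-\Omega H(t)\Omega]/2$, so that $\Omega C(t)\Omega=C(t)$ and $\Omega A(t)\Omega=-A(t)$, the $N$ pulses transform the generator in the interval $(T_n,T_{n+1})$ into $\Omega^{n}H(t)\Omega^{n}=C(t)+(-1)^{n}A(t)$. Hence, up to the fixed unitary $\Omega^{N}$ (which commutes with $\Omega$), the controlled evolution equals $\tilde U(T)=\mathcal{T}\exp[-i\int_{0}^{T}(C(t)+f(t)A(t))\,dt]$ with the UDD modulation $f(t)=(-1)^{n}$ on $(T_n,T_{n+1})$. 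Since $\Omega$ is preserved exactly when $\tilde U$ commutes with $\Omega$, and the commuting/anti-commuting split is inherited term by term, it suffices to show that every contribution to the Dyson (time-dependent perturbation) series of $\tilde U$ carrying an odd number of $A$-factors vanishes through order $T^{N}$.

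First I would expand $\tilde U$ as a sum of nested simplex integrals $(-i)^{k}\int_{T>s_1>\cdots>s_k>0}X_1(s_1)\cdots X_k(s_k)\,d^{k}s$ with each $X_j\in\{C,\,fA\}$, and organize the order-$k$ term by assigning each time $s_j$ to the interval containing it. The first symmetry hypothesis, $H(T_n+s\tau_n)=H(sT_1)$, is precisely what makes the integrand inside every interval the same function $G(u)$ of the local rescaled time $u=(t-T_n)/\tau_n\in[0,1]$; consequently, for a fixed assignment the operator-valued building blocks depend only on the pattern of $C$'s and $A$'s inside each interval, not on which intervals are used, while the interval labels enter solely through the scalar weights $\prod_n\tau_n^{m_n}$ and the signs $\prod_n(-1)^{n m_n^{A}}$, where $m_n$ and $m_n^{A}$ count the times and the $A$-factors landing in interval $n$. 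Each order-$k$ term thus factorizes into a sum over ``shapes'' of (a common operator) times (a purely scalar UDD moment sum over interval labels).

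The second symmetry hypothesis, $H(T_n+s\tau_n)=H(T_{n+1}-s\tau_n)$, i.e.\ $G(u)=G(1-u)$, then plays the decisive role through a symmetrization lemma for ordered integrals: reflecting $u\mapsto1-u$ inside each interval shows that the time-ordered operator clusters collapse onto fully symmetrized combinations of the per-interval averages---for example, at second order $\int_{u_1>u_2}[A_G(u_1)C_G(u_2)+C_G(u_1)A_G(u_2)]=\tfrac12[\bar A,\bar C]_+$ with $\bar A=\int_0^1 A_G$ and $\bar C=\int_0^1 C_G$. After this collapse, the refined scalar weight sums regroup exactly into the combinations of $\int_0^{T}f(t)\,t^{m}\,dt$ and their nested analogues that govern ordinary time-independent UDD. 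Invoking the established UDD theorem for time-independent Hamiltonians, all of these vanish for $m\le N-1$, so each surviving coefficient of the odd part vanishes through order $T^{N}$, and the finite norm of $H$ bounds the first nonvanishing odd contribution by $O(T^{N+1})$.

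The step I expect to be the main obstacle is proving the symmetrization lemma together with its bookkeeping at arbitrary order $k\le N$: one must show that, for every shape with an odd number of $A$'s, the palindromic symmetry $G(u)=G(1-u)$ converts each ordered cluster integral into symmetric products of per-interval averages in just the way needed for the accompanying scalar sums to assemble into the vanishing time-independent UDD moments. This is where conditions~(\ref{eq:HtSCondit}) are jointly indispensable, and it is exactly the place where the hypothesis can fail---consistent with the counterexamples flagged in the introduction, in which an inner UDD sequence of \emph{odd} order destroys the interval symmetry and the outer UDD no longer reaches its nominal order. A secondary technical point is to keep the order counting uniform: with $\|H\|$ finite, each nested integral over a region of measure $O(T^{k})$ is bounded, so the termwise cancellation through order $N$ legitimately yields an $O(T^{N+1})$ remainder.
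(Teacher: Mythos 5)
Your reduction to the time-independent UDD theorem rests entirely on the ``symmetrization lemma,'' and that lemma is false beyond second order, so the step you flag as the main obstacle is not a bookkeeping issue but a failure of the proposed mechanism. The reflection $u\mapsto 1-u$ identifies a time-ordered cluster integral with the \emph{anti}-time-ordered integral of the same word. At order two these two orderings exhaust all permutations, which is why $\int_{u_1>u_2}[A(u_1)C(u_2)+C(u_1)A(u_2)]=\tfrac12[\bar A,\bar C]_+$ holds; at order $k\ge 3$ the reflection pairs only $2$ of the $k!$ orderings, so a within-interval cluster does \emph{not} collapse to the fully symmetrized product of per-interval averages that the time-independent case produces for $k$ coincident times in one interval (e.g.\ $\int_{u_1>u_2>u_3}A(u_1)C(u_2)A(u_3)$ with palindromic but non-commuting operator-valued $A(\cdot),C(\cdot)$ is not a function of $\bar A,\bar C$ alone). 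This matters because the purely scalar interval sums attached to a fixed shape, such as $\sum_n(-1)^n\tau_n^{3}$, do \emph{not} vanish individually under UDD timing (already $\sum_n(-1)^n\tau_n^2\neq 0$ for $N=2$); the order-$k$ cancellation requires the operator content of the clusters to recombine across shapes in exactly the way that full symmetrization provides at $k=2$ and that your argument does not provide at $k\ge 3$. A secondary weakness is that ``the combinations of $\int_0^T f(t)\,t^m\,dt$ and their nested analogues'' does undeclared work: the moments are only the first-order UDD conditions, and the higher-order conditions are genuinely nested integrals that do not factor into moments.

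The paper's proof takes a different route that handles precisely the structure your averaging discards. It rescales each interval to equal length in a variable $\theta\in[0,\pi]$, whereupon condition~(\ref{eq:HtSCondit}) says exactly that $C(\theta)$ and $A(\theta)$ are periodic with period $\pi/(N+1)$ and even within each period, hence contain only harmonics $\cos[2k(N+1)\theta]$. The modulation functions $G$ (interval lengths) and $F=(-1)^n$ then carry the same Fourier signatures as in the time-independent proof of Yang and Liu, and their induction on nested trigonometric integrals goes through verbatim: the extra harmonics of $C$ and $A$ sit at multiples of $2(N+1)$ and cannot shift an odd-$A$ term onto a surviving frequency at order $\le N$. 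To salvage a time-domain argument you would have to track the deviations $A(u)-\bar A$ and $C(u)-\bar C$ through all nested integrals, which is exactly what the Fourier bookkeeping accomplishes; as written, your proof establishes the claim only through second order.
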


\noindent Note: The previous extension of UDD to time-dependent
systems requires that the Hamiltonians be analytic (having smooth
time-dependence)~\cite{Pasini2010UDDTime,MultilevelUDD:2010,Yang2010ReviewDD}.
In Theorem~\ref{thm:Ht}, the Hamiltonians are not required to be
analytic but with certain symmetries. The symmetry requirements on
the time-dependence of the Hamiltonians can actually be fulfilled by
designing the timing of DD sequences on the inner levels so that
recursive nesting of DD is possible.

\begin{proof}

The evolution under the control of $\Omega$ reads
\begin{equation}
U(T)=\Omega^{N}V_{N}\Omega V_{N-1}\cdots\Omega V_{1}\Omega V_{0},
\end{equation}
with the evolution operator
\begin{align}
V_{n} & \equiv\mathcal{T}\exp\left[-i\int_{T_{n}}^{T_{n+1}}H(t)dt\right]\nonumber \\
 & \equiv\mathcal{T}_{\theta}\exp\left[-i\tau_{n}\int_{\frac{n}{N+1}\pi}^{\frac{n+1}{N+1}\pi}H^{\text{rel}}(\theta)d\theta\right],
 \end{align}
where $\mathcal{T}_{\theta}$ stands for ordering in $\theta$, and
\begin{equation}
H^{\text{rel}}(\theta)=\frac{N+1}{\pi}H(t),
\end{equation}
where $\theta=\frac{n\pi}{N+1}+\frac{t-T_n}{\tau_n}\frac{\pi}{N+1}$
for $t\in(T_n,T_{n+1}]$. The symmetry requirements given in
Eq.~(\ref{eq:HtSCondit}) are transformed to
\begin{equation}
H^{\text{rel}}(\frac{n\pi}{N+1}+\theta)=H^{\text{rel}}(\frac{n+1}{N+1}\pi-\theta)=H^{\text{rel}}(\theta).
\label{eq:H_symetric}
\end{equation}
The Hamiltonian $H^{\text{rel}}(\theta)$ can be
separated into two parts,
\begin{equation}
H^{\text{rel}}(\theta)=C(\theta)+A(\theta),
\end{equation}
with
\begin{subequations}
\begin{align}
C(\theta)=\left[H^{\text{rel}}(\theta)+\Omega H^{\text{rel}}(\theta)\Omega\right]/2,\\
A(\theta)=\left[H^{\text{rel}}(\theta)-\Omega
H^{\text{rel}}(\theta)\Omega\right]/2.
\end{align}
\end{subequations}
$C(\theta)$ and $A(\theta)$ commute and anti-commute with the
operator $\Omega$, respectively.

Now we rewrite the propagator as
\begin{equation}
U(T)=\mathcal{T}_{\theta}\exp\left[-iT\int_{0}^{\pi}G(\theta)\Big(C(\theta)+F(\theta)A(\theta)\Big)d\theta\right],\label{eq:U_ThetaForm}
\end{equation}
where
\begin{subequations}
\begin{align}
G(\theta)&=\frac{1}{2}\left[\cos\frac{n\pi}{N+1}-\cos\frac{(n+1)\pi}{N+1}\right],\\
F(\theta)&=(-1)^{n},
\end{align}
\end{subequations}
for $\theta\in\left(\frac{n\pi}{N+1},\frac{(n+1)\pi}{N+1}\right]$.
Thus, the part of Hamiltonian $C(\theta)$ that commutes with
$\Omega$ is modulated by the step function $G(\theta)$ which has
step heights given by the UDD intervals, and the part of Hamiltonian
$A(\theta)$ that anti-commutes with $\Omega$ is modulated by
$G(\theta)$ and the periodic modulation function $F(\theta)$.
Furthermore, both $C(\theta)$ and $A(\theta)$ have the same
symmetries as $H^{\text{rel}}(\theta)$ in Eq.~(\ref{eq:H_symetric}).
The symmetries of the time-dependent Hamiltonian and the modulation
functions make them have particular Fourier expansions, which lead
us to a proof of the theorem in a procedure similar to the proof of
UDD in Ref.~\cite{Yang2008PRL}.

The Fourier expansions of the modulation functions and the time-dependent Hamiltonians are
\begin{subequations}
\begin{align}
G(\theta)&=\sum_{k=0}^{\infty}g_{k}\sin[2k(N+1)\theta\pm\theta],\label{eq:G_Theta} \\
F(\theta)&=\sum_{k=0}^{\infty}f_{k}\sin[(2k+1)(N+1)\theta],\label{eq:F_Theta}\\
C(\theta)&=\sum_{k=0}^{\infty}c_{k}\cos[2k(N+1)\theta],\label{eq:C_Theta}\\
A(\theta)&=\sum_{k=0}^{\infty}a_{k}\cos[2k(N+1)\theta].\label{eq:Z_Theta}\end{align}
\end{subequations}
Here the operators $c_{k}$ and $a_{k}$ commute and anti-commute with
$\Omega$, respectively. The features of these Fourier expansions to
be used in the proof below are: (i) Both $C$ and $A$ contain only
cosine harmonics of order of even multiple of $(N+1)$; (ii)
$F(\theta)$ contains only sine harmonics of order of odd multiple of
$(N+1)$; $G(\theta)$ contains only sine harmonics of an order
differing from an even multiple of $(N+1)$ by $+1$ or $-1$.

With the product-to-sum trigonometric formulae, we have
\begin{equation}
U(T)=\mathcal{T}_{\theta}\exp\left[-iT\int_{0}^{\pi}\left(\tilde{C}(\theta)+\tilde{A}(\theta)\right)d\theta\right],\end{equation}
 with\begin{subequations} \begin{align}
\tilde{C}(\theta)\equiv& G(\theta)C(\theta)=\sum_{k}\tilde{c}_{k}\sin[2k(N+1)\theta\pm\theta],\\
\tilde{A}(\theta)\equiv& G(\theta)F(\theta)A(\theta)\nonumber\\
=&\sum_{k}\tilde{a}_{k}\cos[(2k+1)(N+1)\theta\pm\theta].
\end{align}
\end{subequations}
A straightforward method is to expand $U(T)$ according to the
standard time-dependent perturbation theory.  It should be noted
that such perturbation-theoretic expansion requires that the
modulated Hamiltonian have bounded norm. In the expansion, the terms
which do not commute with $\Omega$ must contain an odd times of
$\{\tilde{a}_{k}\}$ (since $\Omega$ anti-commutes with
$\{\tilde{a}_{k}\}$). The expansion coefficients can be written as
\begin{align}
& (-iT)^{n}\times \nonumber \\
&\int_{0}^{\pi}y_{k_{1}}^{\alpha_1,\eta_{1}}(\theta_{1})\int_{0}^{\theta_{1}}y_{k_{2}}^{\alpha_2,\eta_2}(\theta_{2})\cdots
\int_{0}^{\theta_{n-1}}y_{k_{n}}^{\alpha_n,\eta_{n}}(\theta_{n})d\theta_{1}\cdots
d\theta_{n},
\label{eq:integralVanish}
\end{align}
with $y_{k}^{s,\pm}(\theta)\equiv\sin[2k(N+1)\theta\pm\theta]$
associated with an operator $\tilde{c}_{k}$, and
$y_{k}^{c,\pm}(\theta)\equiv\cos[(2k+1)(N+1)\theta\pm\theta]$
associated with an operator $\tilde{a}_k$, for $\alpha_j\in\{c,s\}$
and $\eta_{j}\in\{+,-\}$. By induction and repeatedly using the
product-to-sum trigonometric formulae, one can straightforwardly
verify that the coefficients in Eq.~(\ref{eq:integralVanish}) vanish
for $n\leq N$ and $y_{k}^{c,\pm}$ appearing an odd number of times.
Thus vanish any terms in the expansion which contain products of an
odd number of operators in $\{\tilde{a}_k\}$ and have a power of $T$
lower than $(N+1)$.

\end{proof}

\subsubsection{NUDD}

For a time-independent Hamiltonian $H$ under DD control of
instantaneous operations of a unitary Hermitian operator $\Omega$
applied at $t_1,t_2,\ldots,t_{N'}$, the evolution $U(\tau)$ from
$t_0=0$ to $t_{N'+1}=\tau$
 is equivalent to the evolution under a time-dependent Hamiltonian
\begin{equation}
H(t)=\Omega^nH\Omega^n,\label{eq:HtStep}
\end{equation}
for $t\in(t_n,t_{n+1}]$. Such time-dependence is not analytic. If
$N'$ is an even number and  the operation sequence is symmetric, the
time-dependent Hamiltonian $H(t)$ is time symmetric in $[0,\tau]$.
Thus, according to Theorem~\ref{thm:Ht}, in a UDD sequence of an
operator $\Omega$ applied at $T_1,T_2,\ldots,T_N$ between $T_0=0$
and $T_{N+1}=T$, each interval of free evolution
$e^{-iH(T_{n+1}-T_n)}$ can be substituted with the evolution
inserted by a sequence of another operation $\Omega'$ applied at
$T_{n,1},T_{n,2},\ldots,T_{n,N'}$ between $T_{n,0}\equiv T_n$ and
$T_{n,(N'+1)}\equiv T_{n+1}$, with the same symmetric structure in
all intervals, i.e.,
\begin{subequations}
\begin{align}
T_{n+1}-T_{n,N'-k}&=T_{n,k+1}-T_n, \\
\frac{T_{n,k}-T_{n,k'}}{T_{n+1}-T_n} &
=\frac{T_{m,k}-T_{m,k'}}{T_{m+1}-T_{m}}.
\end{align}
\end{subequations}
In particular, the inner level control of $\Omega'$ can be chosen as
an even order UDD.

Now we describe the construction of NUDD for protecting a set of
unitary Hermitian operators $\{\Omega_l\}_{l=1}^{L}$. First, the
$N_{L}$th order UDD sequence of $\Omega_L$ is constructed with
pulses applied at \begin{equation}T_{n_L}
=T\sin^2\frac{n_L\pi}{2N_L+2},\end{equation} between $T_0=0$ and
$T_{N_{L}+1}=T$ as the outermost level of control. $N_L$ could be
either odd or even. Then the free evolution in each interval is
substituted by the $N_{L-1}$th order UDD sequence of $\Omega_{L-1}$
applied at
\begin{equation}
T_{n_L,n_{L-1}}=T_{n_L}+\left(T_{n_L+1}-T_{n_L}\right)\sin^2\frac{n_{L-1}\pi}{2N_{L-1}+2},
\end{equation}
in each interval between $T_{n_L,0}\equiv T_{n_L}$ and
$T_{n_L,N_{L-1}+1}\equiv T_{n_L+1}$, with $N_{L-1}$ being an {\em
even} number. So on and so forth, the $l$th level of control is
constructed by applying $N_l$ times of $\Omega_l$ in each interval
between $T_{n_L,\ldots,n_{l+1},0}\equiv
T_{n_L,\ldots,n_{l+2},n_{l+1}}$ and
$T_{n_L,\ldots,n_{l+1},N_{l}+1}\equiv
T_{n_L,\ldots,n_{l+2},n_{l+1}+1}$ at
\begin{align}
T_{n_L,\ldots,n_{l+1},n_{l}}&
=T_{n_L,\ldots,n_{l+1}}
\nonumber \\ &
+\left[T_{n_L,\ldots,n_{l+2},n_{l+1}+1}-T_{n_L,\ldots,n_{l+1}}\right]\sin^2\frac{n_l\pi}{2N_l+2},
\end{align}
with $N_l$ being an {\em even} number. We denote the evolution under
such NUDD as
$U^{\text{U}[N_1,N_2,\ldots,N_L]}_{\Omega_1,\Omega_2,\ldots,\Omega_L}(T)$,
where the superscript $\text{U}$ denotes the nesting of UDD
sequences.

According to Theorem~\ref{thm:Ht}, the outer levels of UDD control
are not affected by the inner levels of even-order UDD control. And
according to Theorem~\ref{thm:MOOS}, the inner levels of DD control
are not affect by the outer levels of control since $\{\Omega_l\}$
is an MOOS. Thus each operator $\Omega_l$ is protected up to an
error of $O\left(T^{N_l+1}\right)$. The number of control intervals
is
\begin{equation}
N^{\text{U}[N_1,N_2,\ldots,N_L]}_{\text{pulse}}=\left(N_1+1\right)\left(N_2+1\right)\cdots\left(N_L+1\right),
\label{Eq:PulseNumber}
\end{equation}
increasing polynomially with the decoupling order.

\section{NUDD of multi-qubit systems}
\label{sec:Qubit}

\subsection{General multi-qubit systems}

To protect a multi-qubit system to a given order of precision, we
just need to protect the operators in an MOOS described in
Sec.~\ref{sub:MOOS}, by using the method depicted in
Sec.~\ref{sub:FirstOrder} for the first order preservation, or by
using the NUDD scheme in Sec.~\ref{sub:NestedDD} for higher order
preservation.
 Since for general multi-qubit systems, the MOOS can be chosen as a set of single-qubit
operators such as the Pauli matrices
$\{\sigma^{(l)}_x,\sigma^{(l)}_z\}_{l=1}^{L}$, NUDD can be
implemented with only single-qubit flips.

In suppressing the relaxation or pure dephasing to the first order,
i.e., in protecting the MOOS $\{\sigma_{z}^{(l)}\}_{l=1}^{L}$ or
$\{\sigma_{x}^{(l)}\}_{l=1}^{L}$, the first order scheme in
Sec.~\ref{sub:FirstOrder} requires $2^{L}$ pulse intervals. For
suppressing decoherence in general cases, the first order scheme
requires $4^{L}$ intervals to protect the MOOS
$\{\sigma_{x}^{(l)},\sigma_{z}^{(l)}\}_{l=1}^{L}$. Such numbers of
intervals are actually the minima required for protecting $L$-qubit
systems, as proven in Ref.~\cite{Wocjan02QIC}. In preserving the
coherence of a multi-qubit system to an arbitrarily high order of
precision by NUDD, the number of intervals given in
Eq.~(\ref{Eq:PulseNumber}) increases polynomially with the
decoupling orders, much less than that required in CDD. A question
is whether NUDD is optimal or nearly optimal in terms of the number
of control pulses. For QDD control of one qubit, numerical check up
to the fourth order indicates that NUDD is nearly optimal, differing
from the optimal solutions by less than 3 control
pulses~\cite{West2010}. NUDD of a larger MOOS, however, can be shown
to be far from the optimal in the second decoupling order: For an
$L$-qubit system suffering pure dephasing, an $L$-level NUDD in the
second order requires $\sim 3^L$ control intervals, while the SDD,
which uses two mutually symmetric first order DD sequences to
realize the second order control (see Sec.~\ref{sub:GeneralCDD}),
requires only $2\times 2^L$ intervals. We expect that in higher
orders of DD, there exist DD schemes (using only single-qubit
control for multi-qubit systems) much more efficient than NUDD. But
no explicit solutions are known to us, except for a few numerical
solutions.

\subsection{Discussions on QDD}

In particular, for a single-qubit system, NUDD reduces to QDD with
even order DD on the inner level. Explicitly the nested sequence
$U_{\sigma_{z},\sigma_{x}}^{\text{U}[2N_1,N_2]}(T)$ is the QDD
sequence protecting the MOOS operators $\sigma_{z}$ and $\sigma_{x}$
to orders $2N_1$ and $N_2$, respectively. Thus based on
Theorems~\ref{thm:MOOS} and ~\ref{thm:Ht}, the validity of QDD with
even order UDD on the inner level is proven.

But Theorem~\ref{thm:Ht} does not apply to the case of odd order UDD
control on the inner level. Actually, when the inner level UDD has
an odd order, which breaks the symmetry condition of the theorem,
the outer level UDD may be spoiled. For a specific example, let us
consider the control of a Hamiltonian like
$$H=J_{0}+J_{1}\Omega_{1}+J_{2}\Omega_{2}+J_{1,2}\Omega_{1}\Omega_{2},$$
where $J_{0}$, $J_{1}$, $J_{2}$, and $J_{1,2}$ are arbitrary bath
operators, and $\Omega_{1}$ and $\Omega_{2}$ are two system
operators forming an MOOS (such as $\Omega_{1}=\sigma_{z}$ and
$\Omega_{2}=\sigma_{x}$ for a single-qubit system). We choose the
inner level control as the first order UDD of $\Omega_{1}$ and the
outer level as the second order UDD of $\Omega_{2}$. The propagator
of this NUDD is
\begin{align}
U_{\Omega_{1},\Omega_{2}}^{\text{U}[1,2]}(T) = &
\left(\Omega_{1}e^{-iH\tau}\Omega_{1}e^{-iH\tau}\right)
\Omega_{2}\left(\Omega_{1}e^{-iH2\tau}\Omega_{1}e^{-iH2\tau}\right)\Omega_{2}
\nonumber \\
 &
\times\left(\Omega_{1}e^{-iH\tau}\Omega_{1}e^{-iH\tau}\right),
\end{align}
 where the evolution in each pair of parentheses corresponds to the UDD
 control on the inner level and $\tau=T/8$.
 The time expansion gives
\begin{align}
U_{\Omega_{1},\Omega_{2}}^{\text{U}[1,2]}(T) =
 U_{B}+\Omega_{1}O(T^{2})+\Omega_{1}\Omega_{2}O(T^{2}) +O(T^{3}),
 \end{align}
 where $U_{B}$ is a pure bath evolution operator.
 Thus even though a second order UDD
sequence of $\Omega_{2}$ is applied, $\Omega_{2}$ is preserved
only to the first order.

The above example indicates that the effective Hamiltonian resulting
from the inner level UDD control can not be written into an analytic
form. Otherwise, according to Ref.~\cite{Pasini2010UDDTime}, which
establishes the performance of UDD on analytically time-dependent
systems, the outer level UDD should not be affected. As proposed in
Ref.~\cite{Pasini2010UDDTime}, for an NUDD evolution
$$U_{\Omega_1,\Omega_2}^{U[N',N]}(T)=\Omega_2^{N}U_{\Omega_1}^{U[N']}(\tau_{N})\Omega_2
\cdots
U_{\Omega_1}^{U[N']}(\tau_{1})\Omega_2U_{\Omega_1}^{U[N']}(\tau_{0}),
$$
one can define the effective Hamiltonian resulting from the inner
level of control as
\begin{equation}
\tilde{H}_{\text{eff}}(T_n+t)\equiv i
\left[{\partial_t}U_{\Omega_1}^{U[N']}(t)\right]\left[U_{\Omega_1}^{U[N']}(t)\right]^{\dagger},
\label{Eq:effH}
\end{equation}
for $t\in(0,\tau_n]$. The outer level UDD can be viewed as acting on
this effective Hamiltonian. The Hamiltonian defined in
Eq.~(\ref{Eq:effH}), however, is only piecewise analytic and is even
discontinuous at $T_n$'s. Therefore, the theorem established in
Ref.~\cite{Pasini2010UDDTime} about UDD control of analytically
time-dependent systems does not apply to QDD.

Thus, the complete proof (or disproof) of the validation of QDD is
still an open question. We should mention that if QDD is proven
valid for one qubit, according to our formalism of nested DD, the
same NUDD is also valid for any two operators forming an MOOS. In
this way the QDD control can be generalized, in particular, to the
protection of two-qubit systems from dephasing and disentanglement,
etc.

\section{DD by pulses of finite amplitude\label{sec:SoftPulse}}

The ideal instantaneous pulses are not realistic in experiments
since they would contain an infinite amount of energy. Of course,
when the pulses are mush shorter than the other timescales of the
system and the bath, it is a good approximation to treat them as
infinitely short. But if this condition is not satisfied, it is of
interest to consider DD by pulses of finite amplitude. The problem
of first order DD with finite-amplitude pulses has been considered
within the Eulerian DD framework~\cite{Viola2003EulerianDD} and in a
geometric picture~\cite{Chen2006GeometricDD}. It is possible to
achieve arbitrary control precision by recursive construction of
pulse shapes~\cite{Khodjasteh2010}. DD of single-qubit systems using
finite-amplitude pulses up to a control error in the second order of
pulse durations has been presented in
Refs.~\cite{PasiniPRA2008,Pasini2008ShortPulse}. In UDD,
finite-amplitude pulses of higher orders of control precision can
also be incorporated~\cite{Uhrig2010}.

Here we consider the general case of DD by finite-amplitude pulses.
Let us consider a short-pulse operation by the Hamiltonian
\begin{equation}
H_{\Omega}(t)=v(t)\Omega,
\end{equation}
where $\Omega$ is a unitary Hermitian operator. We aim to design the
pulse shape of $v(t)$ such that the evolution during the pulse
control approximates the ideal $\delta$-pulse control up to a
certain order of the pulse duration $\tau_p$, i.e.,
\begin{align}
U(\tau_{p},0) & =\mathcal{T}\exp\left[-i\int_{0}^{\tau_{p}}\left[H+H_{\Omega}(t)\right]dt\right]
\nonumber\\
 & = e^{-i(\tau_{p}-\tau_{s})H}P_{\Omega}e^{-i\tau_{s}H}+O(\tau_{p}^{M_{p}}),
\end{align}
where $P_{\Omega}\equiv
\exp\left[-i\int_{0}^{\tau_{p}}H_{\Omega}(t)dt\right]$ is the
desired instantaneous control applied at the time $\tau_{s}$. In
particular, we need the $\pi$ pulse $P_{\Omega}=\Omega$ up to a
trivial global phase factor. Unfortunately, a no-go theorem
established in Ref.~\cite{PasiniPRA2008,Pasini2008ShortPulse}
restricts that instantaneous $\pi$ pulses can not be approximated by
a finite-amplitude pulse with error lower than
$O\left(\tau_{p}^2\right)$ without perturbing the bath evolution.
Thus, here we focus on the first order pulse shaping with $M_{p}=2$.

We write the evolution operator as
\begin{equation}
U(\tau_{p},0)=e^{-i(\tau_{p}-\tau_{s})H}U_{\Omega}e^{-i\tau_{s}H},\end{equation}
where\begin{equation}
U_{\Omega}=\mathcal{T}\exp\left[-i\int_{0}^{\tau_{p}}\tilde{H}_{\Omega}(t)dt\right],\end{equation}
with $\tilde{H}_{\Omega}(t)\equiv
e^{iH(t-\tau_s)}H_{\Omega}(t)e^{-iH(t-\tau_s)}$. The correction term
is
\begin{align}
h_{\Omega}(t) & \equiv \tilde{H}_{\Omega}(t)-H_{\Omega}(t)
\nonumber \\
 & =
v(t)\sum_{k=1}^{\infty}\frac{(t-\tau_s)^k}{k!}\underbrace{[iH,[iH,\cdots[iH,\Omega]\cdots]]}_{k
\textrm{~folds}}.
\end{align}
We want to design $v(t)$ such that the control error
\begin{align}
\delta P_{\Omega}& \equiv
U_{\Omega}-P_{\Omega}
\nonumber \\ &
=\mathcal{T}\left\{e^{-i\int_{0}^{\tau_{p}}H_{\Omega}(t)dt}\left[e^{-i\int_{0}^{\tau_{p}}h_{\Omega}(t)dt}-1\right]\right\}  =O(\tau_{p}^{2}).
\end{align}
The leading order term in $\delta P_{\Omega}$ is
\begin{align}
\eta^{(1)} & =\mathcal{T}\left\{e^{-i\int_{0}^{\tau_{p}}H_{\Omega}(t)dt}\int_{0}^{\tau_{p}}(t-\tau_s)v(t)[H,\Omega]dt\right\}
\nonumber \\
 & =\int_{0}^{\tau_{p}}(t-\tau_s)v(t)e^{-i\int_{t}^{\tau_{p}}H_{\Omega}(s)ds}[H,\Omega]e^{-i\int_{0}^{t}H_{\Omega}(s)ds}dt.
\label{eq:eta1}
\end{align}
Using $\Omega^{2}=1$ and $H_{\Omega}(t)=v(t)\Omega$, we have
\begin{equation}
e^{-i\int_{t_{1}}^{t_{2}}H_{\Omega}(s)ds}=\cos\left[\int_{t_{1}}^{t_{2}}v(s)ds\right]
-i\Omega\sin\left[\int_{t_{1}}^{t_{2}}v(s)ds\right].
\label{eq:OmegaRoation}
\end{equation}
Now we decompose the Hamiltonian $H$ into two parts as $H=A+C$, with $A$ and
$C$ anti-commuting and commuting with $\Omega$, respectively.
The leading order error term in Eq.~(\ref{eq:eta1}) becomes
\begin{equation}
\eta^{(1)}=[A,\Omega]\eta_{11}-i\Omega[A,\Omega]\eta_{12},\end{equation}
where
\begin{subequations}
\begin{align}
\eta_{11}=\int_{0}^{\tau_{p}}(t-\tau_s)
v(t)\cos\left[\phi_{0}-\psi(t)\right]dt,\\
\eta_{12}=\int_{0}^{\tau_{p}}(t-\tau_s)v(t)\sin\left[\phi_{0}-\psi(t)\right]dt,
\end{align}
\end{subequations}
 with
$\psi(t)\equiv2\int_{\tau_{s}}^{t}v(s)ds$ and
$\phi_{0}=\int_{\tau_{s}}^{\tau_{p}}v(s)ds-\int_{0}^{\tau_{s}}v(s)ds$.
To eliminate the leading order error, we just need to make
$\eta_{11}=\eta_{12}=0$, which are the same as those derived in
Ref.~\cite{PasiniPRA2008} for single-qubit flip control.

Using the finite-amplitude pulses designed as depicted above, we can
realize DD based on an MOOS up to an error in the second order of
the pulse duration. Note that in some DD schemes, such as CDD
$U_{\Omega_1,\ldots,\Omega_L}^{C[N_1,\ldots,N_L]}(T)$ as shown in
Eq.~(\ref{eq:CDDway2}), operations of different $\Omega_l$'s may
coincide. For example, $\Omega_1$, $\Omega_2$, and $\Omega_3$
coincide at the end of the sequence
$U_{\Omega_1,\Omega_2,\Omega_3}^{C[N_1,N_2,N_3]}(T)$. In this case,
we can define a new unitary Hermitian operator $\Omega'$ as the
product of the operations (such as $\Omega_1\Omega_2\Omega_3$ in the
example), and design the pulse $H_{\Omega'}(t)=v(t)\Omega'$ to
achieve the operation up to an error of $O(\tau_p^2)$. Such
finite-amplitude pulse operation, however, would involve multi-qubit
interactions and may not be easy to be implemented in experiments,
unless $\Omega'$ happens to be a single-qubit operation. This
problem, fortunately, does not exist in NUDD, since there no two
operations coincide, which stands for another advantage of NUDD over
CDD.

Ref.~\cite{Uhrig2010} has presented a method to implement UDD by
higher order shaped pulses. At the first sight, it seems that those
pulses can be incorporated in NUDD. However, the method in
Ref.~\cite{Uhrig2010} requires a starting and a stopping pulse to
protect the UDD sequence. In NUDD, such starting and stopping
operations will be mixed up with the outer level control applied at
the same time. Then the operations on the outer levels need to be
redesigned, which may be much more complicated than the design in
Ref.~\cite{Uhrig2010} since different operations may interfere with
each other and multi-qubit interactions may be involved. Explicit
implementation of NUDD with finite-amplitude pulses of higher order
control accuracy is an interesting topic for future work.

\section{Conclusions and discussions\label{sec:Conclusions}}

Based on two theorems, we have presented explicit schemes of
dynamical decoupling to preserve operators in an MOOS (i.e., unitary
Hermitian operators which either commute or anti-commute with each
other) to arbitrary decoupling orders for quantum systems
arbitrarily coupled to quantum baths. All system operators in a Lie
algebra generated from the MOOS by commutation, anti-commutation,
linear combinations, and repetitions are also preserved.
Theorem~\ref{thm:MOOS} states that the inner levels of DD control
are unaffected by the outer levels if the control operations are
elements of an MOOS. Theorem~\ref{thm:Ht} states that UDD still
works if the Hamiltonians in different intervals have the same
function form of the relative time and are symmetric, regardless of
the analytic properties of the Hamiltonians. These theorems enable a
construction of higher order DD by nesting UDD sequences of even
orders. NUDD protects system operators in a Lie algebra generated
from an MOOS to an arbitrary order of precision. For multi-qubit
systems, any physical quantities can be protected, and NUDD can be
implemented by single-qubit operations. For single-qubit systems,
NUDD reduces to QDD with even order UDD on the inner level. Thus the
theorems provide a rigorous proof of the validity of QDD with even
order DD on the inner level.

NUDD achieves a desired decoupling order with only a polynomial
increase in the number of pulses, with exponential saving of the
number of pulses as compared with CDD of the same decoupling order.
In suppressing the general decoherence, the number of pulses still
scales exponentially with the number of qubits. Such exponential
increase, indeed, is required by a theorem which sets the minimum
number of control intervals to be $4^L$ or $2^L$ for protecting a
general or pure dephasing $L$-qubit system to the first decoupling
order, respectively~\cite{Wocjan02QIC}.

For Hamiltonians of certain structures, such as the Hamiltonians of
qudit systems with bipartite interactions, reduction in the number
of pulses is possible~\cite{Wocjan02PRA}. Fewer levels of nesting
are required if the structures of the Hamiltonians are exploited and
a proper MOOS is designed. The number of pulses can also be greatly
reduced if we protect only some logically encoded qubits or some
particular states of the
system~\cite{Viola2002EncodedDD,Byrd2002PRL,Dhar2006_PRL,MultilevelUDD:2010}.
For example, if we choose the MOOS as
$\{\sigma^{(1)}_x\otimes\sigma^{(2)}_x, \sigma^{(1)}_z,
\sigma^{(2)}_z\}$ for a two-qubit system, the only possible noise
generator~\cite{Viola2003EulerianDD} after protection is
$\sigma^{(1)}_z\otimes\sigma^{(2)}_z$, which commutes with all the
elements in the MOOS; in this case, the logical qubit $\alpha
|\uparrow\uparrow\rangle+\beta |\downarrow\downarrow\rangle$ or
$\alpha' |\uparrow\downarrow\rangle+\beta'
|\downarrow\uparrow\rangle$ is protected by only three levels of
nesting.

NUDD protecting two operators forming an MOOS is near-optimal and
has the same timing as QDD. In general, however, NUDD is by far not
optimal, since a large nesting level $L$ requires much more control
intervals than the symmetrized DD for achieving the second
decoupling order. An interesting question for future study is how to
construct optimal or nearly optimal higher-order DD for general
multi-qubit or multi-level systems.

For realistic implementation of DD, we have derived the conditions
for finite-amplitude pulses to simulate ideal operations up to an
error in the second order of pulse duration, and the conditions
reach the same results as for single-qubit flip control given in
Ref.~\cite{PasiniPRA2008}. Thus we can apply the pulses designed in
Ref.~\cite{PasiniPRA2008} to the higher order DD schemes for general
quantum systems.

\begin{acknowledgments}
This work was supported by Hong Kong GRF CUHK402209. We are grateful
to L.Viola and D. Lidar for discussions.
\end{acknowledgments}


\end{document}